\newenvironment{sciabstract}{%
\begin{quote} \bf}
{\end{quote}}
\newcounter{lastnote}
\newtheorem{theorem}{Theorem}
\newtheorem{lemma}[theorem]{Lemma}
\title{How structurally stable are \\global socioeconomic systems?}
\author{Serguei Saavedra$^{1}$\footnote{To whom correspondence should be addressed. E-mail: serguei.saavedra@ebd.csic.es} \footnote{These authors contributed equally to this work} , Rudolf P. Rohr$^{1,2}$\footnotemark[\value{footnote}] , \\Luis J. Gilarranz$^1$, and Jordi Bascompte$^1$ 
\\        
\\$^1$Integrative Ecology Group \\ Estaci\'on Biol\'ogica de Do\~nana (EBD-CSIC) \\ Calle Am\'erico Vespucio s/n \\ E-41092 Sevilla, Spain\\
\\$^2$Unit of Ecology and Evolution \\ Department of Biology, University of Fribourg \\ Chemin du Mus\'{e}e 10 \\ CH-1700 Fribourg, Switzerland}
\date{}
\begin{document}
\maketitle

\vspace{0.4 in}
 
{\bf Keywords:} structural stability, resource-competition model, complex networks, sustainability, socioeconomic systems

\newpage


\begin{sciabstract}
The stability analysis of socioeconomic systems has been centered on answering whether small perturbations when a system is in a given quantitative state will push the system permanently to a different quantitative state. However, typically the quantitative state of socioeconomic systems is subject to constant change. Therefore, a key stability question that has been under-investigated is how strong the conditions of a system itself can change before the system moves to a qualitatively different behavior, i.e., how structurally stable the systems is. Here, we introduce a framework to investigate the structural stability of socioeconomic systems formed by the network of interactions among agents competing for resources. We measure the structural stability of the system as the range of conditions in the distribution and availability of resources compatible with the qualitative behavior in which all the constituent agents can be self-sustained across time. To illustrate our framework, we study an empirical representation of the global socioeconomic system formed by countries sharing and competing for multinational companies used as proxy for resources. We demonstrate that the structural stability of the system is inversely associated with the level of competition and the level of heterogeneity in the distribution of resources. Importantly, we show that the qualitative behavior of the observed global socioeconomic system is highly sensitive to changes in the distribution of resources. We believe this work provides a methodological basis to develop sustainable strategies for socioeconomic systems subject to constantly changing conditions.
\end{sciabstract}

\clearpage

\baselineskip=8.5mm

\section{Introduction}

The stability of socioeconomic systems is repeatedly challenged as a consequence of the rapidly varying environmental, socioeconomic, and technological conditions \cite{Scheffer,Saavedra2,May09}. Financial crisis, national bailouts, and job losses are just a few examples of instability in these systems \cite{Scheffer,May09}. The stability analysis of socioeconomic systems has been centered on understanding whether small perturbations when a system is in a given quantitative state will push the system permanently to a different quantitative state \cite{May09,Schweitzer,Haldane,Battiston,Barzel}. This analysis is known as dynamical stability \cite{Strogatz2001}. Importantly, dynamical stability has increased our understanding on the susceptibility of socioeconomic systems to propagate specific perturbations \cite{May09,Schweitzer,Haldane,Battiston,Barzel}. However, as the quantitative state of socioeconomic systems is coevolving with the rapidly changing distribution and availability of resources, economists are not only interested in a particular steady state, but also in whether there is a family of quantitative states that can guarantee the sustainability of these systems \cite{Keynes,Tinberg,Sargent,Lucas,Arrow}. This indicates that a yet prevailing question about socioeconomic systems is how much variation can a system stand without being pushed out of a qualitative stable behavior \cite{Thom,Vandermeer,Saavedra2}.

To address the above question, we apply the concept of structural stability to socioeconomic systems. We adopt a modified definition of structural stability \cite{Thom,Alberch85,Rohr2}, in which a system is more structurally stable if it has a larger range of conditions compatible with a given qualitative stable state. Here, we explore the structural stability of a general resource-competition system by considering a qualitative behavior in which all its constituent agents have a positive and stable steady state. We choose a positive stable steady state as a potential indicator of an agent that can be self-sustained across time without the need of external inputs. Therefore, the question is: how big is the parameter space in the system compatible with this positive stable steady state? The larger the range of parameter space compatible with a positive stable steady state of all agents, the larger the structural stability of the system will be. 

To illustrate our framework, we study an empirical representation of the global socioeconomic system formed by the network of interactions among countries (agents) competing for multinational companies (as proxy for resources such as investment, technological innovations, and employment). We investigate the range of conditions leading to the preferred qualitative behavior and the mechanisms modulating that range.
 
\section{Materials and Methods}

\subsection{Competition Network}

Our global socioeconomic system is represented by the network of interactions among countries competing for resources. Following economic theory \cite{Keynes,Tinberg,Sargent,Lucas,Arrow}, we focus on three main resources for economic growth: private investment, technological innovations, and employment. We use the 50-richest multinational companies in the world as proxy for these resources. We acknowledge that there can be other representations of these resources that might be important or useful. The list of these companies is taken from the 2013 Fortune Global 500 list. The total revenue of these companies is about $30\%$ of the world's gross domestic product (GDP). We consider that a country utilizes a resource (multinational company) only when the company has employees in that country. Note that we do not have quantitative data on the number of employees. This information is collected from each official company's website in 2013. We focus on 150 countries with at least one million inhabitants. This dataset is provided in the Data Supplement. 

The competition dynamics of socioeconomic systems have been studied using either static equilibrium models\cite{Arrow,Sargent} or exponential growth models\cite{Lucas,Solow,Romer} with no explicit interactions among agents. This has precluded the analysis of socioeconomic systems as potential systems with nonlinear dynamics emerging from collective phenomena and regulated by the network of interactions among their individual agents \cite{Strogatz2001,Anderson,Costanza}. To incorporate these interactions, we propose to model the socioeconomic system as an inter-agent resource-competition network. To define our competition network, first we generate a resource-agent system composed of $N$ agents (countries) and $R$ resources (companies). This system is represented as a bipartite network made of two set of nodes, the agents and their resources. A binary link is drawn between an agent $i$ and a resource $k$ if the agent uses the given resource (See Fig. 1a for a graphical representation). Second, we transform the previously generated resource-agent system into an inter-agent resource-competition network. This competition network is characterized by a symmetric matrix $\boldsymbol{\beta}$ of size $N \times N$, called the competition matrix. The elements of the competition matrix $\beta_{ij}$ are a function of the number of shared resources between agents (See Fig. 1b for a graphical representation). 

\subsection{Dynamics of the competition network}

Formally, we describe the dynamics of our inter-agent resource-competition network by a general Lotka-Volterra model given by the following set of ordinary differential equations \cite{MacArthur,Case2000}.

\begin{equation} \label{equ:ode}
	\frac{dN_{i}}{dt} = \frac{r_{i}}{K_{i}} N_{i}(K_{i} - \sum_{j} \beta_{ij} N_{j}),
\end{equation}
where $N_i \geq 0$ denotes the abundance of the agent $i$ (e.g., the wealth of a country), $r_i >0$ is the growth rate of the agent $i$, and $K_i > 0$ is the carrying capacity of agent $i$. The elements $\beta_{ij}$ correspond to the per capita effect of agent $j$ on the abundance of agent $i$. These elements are given by the values extracted from the competition matrix. By convention and without loss of generality, we set the intra-agent resource-competition to one ($\beta_{ii} = 1$). The off-diagonal elements are set to $\beta_{ij} = \mu \cdot c_{ij}$ ($i \neq j$), where $c_{ij}$ is the number of shared resources between agents $i$ and $j$, and $\mu$ is the general level of global competition in the system ($\mu \geq 0$). This model description emulates current economic thinking on the existence of limited resources and nonlinear dynamics of socioeconomic systems \cite{Anderson,Costanza}. 

In the simple scenario where agents do not compete among them, i.e., when the inter-agent competition is set to zero ($\beta_{ij}=0$ for $i \neq j$), the carrying capacity alone dictates the steady state of the system $N^*_i = K_i$. Moreover, under the condition that $K_i > 0$, it can be mathematically proven that this steady state is globally stable, and that the growth rate of agents only modulates the velocity at which each agent reaches its own carrying capacity. This means that the qualitative behavior in which all agents have a positive and constant abundance ($N_i^*>0$)---what we refer to as the positive stable steady state---can only be possible if the carrying capacity of all agents is also positive ($K_i>0$). See Appendix A for mathematical details.

In the more complex scenario where agents do compete among them for resources, the steady state of the system is function of both the carrying capacity and the competition matrix. It can be mathematically proven that if all eigenvalues of the competition matrix  $\boldsymbol{\beta}$ are positive (they are real because this matrix is symmetric) and if there exists a positive steady state for all agents ($N_i^* > 0$), then this positive steady state is a global attractor in the strictly positive quadrant of the state space \cite{Goh}. Moreover, it can also be mathematically proven that for any vector of carrying capacity $K_i >0$ (keeping the positive eigenvalue condition on the competition matrix), the dynamical system will converge to a unique equilibrium point $N_i^* \geq 0$, where the state of either all or only a few of the agents is positive. See Appendix A for mathematical details. 

The condition of global stability (i.e., eigenvalues of the competition matrix $\boldsymbol\beta$ are all positive) only holds when $\mu$ is below a critical value $\hat{\mu}$ at which one eigenvalue of the competition matrix is equal to zero (see Appendix A for further details). A limitation of the level of global competition $\mu$ is that it has the same units as the competition elements $\beta_{ij}$, and it is not possible to compare this level across different competition matrices. To address this problem, we recast this level by a unit-free indicator of the level of global competition ($\rho$). It is defined as $\rho = \frac{\lambda_1-1}{N-1}$, where $N$ is the number of agents, and $\lambda_1$ is the dominant eigenvalue of the competition matrix $\boldsymbol\beta$.

To find a positive and globally stable steady state of our system, we have to solve the following linear equation $\boldsymbol K = \boldsymbol\beta \cdot \boldsymbol{N^*}$ under the constraint that $N_i^* >0$. Importantly, not all vectors $\boldsymbol K$ lead to a positive steady state. However, if we set the vector $\boldsymbol K^*$ equal to the leading eigenvector of the competition matrix $\boldsymbol\beta$---what we call the structural vector of carrying capacity---we obtain a non-trivial solution. Indeed, following the Perron-Frobenius theorem, the corresponding equilibrium point of the structural vector is non-trivial and given by $N_i^* = \frac{1}{\lambda_{1}} K_i^* > 0$, where $\lambda_1$ is the leading eigenvalue of $\boldsymbol\beta$. 

\subsection{Structural stability of the competition network}

We study the structural stability of our global socioeconomic system by measuring how much variation the resource-competition system can stand without being pushed out of the positive stable steady state. We explore the range in the parameter space of carrying capacities that leads the system to the global stable equilibrium point of Equation \ref{equ:ode} in which all agents have a positive steady state ($N^*_i>0$). To quantify this rage, we measure how big the deviations are from the structural vector compatible with a positive stable steady state of all agents. These deviations are quantified by $ \eta = \frac{1 - \cos^2(\theta)}{\cos^2(\theta)}$, where $\theta$ is the angle between the structural vector $\boldsymbol{K}^*$ and any other parameterization---vector $\boldsymbol{K}$---that can be used as proxy for different conditions in the system, such as different availability of resources.

Indeed, the range of conditions compatible with our definition of positive stable steady state is centered on the structural vector $\boldsymbol{K^*}$. This is shown by the following derivation. To find a non-trivial equilibrium point $N_i^* >0$, we can link the deviation $\eta$ with the indicator of global competition $\rho$ by satisfying the inequality $\eta < \frac{1-\rho}{(N-1)\rho + 1}$ \cite{Bastolla1}. From this inequality, we can see that the lower the level of global competition $\rho$, the lower the collinearity between the structural vector and any other vector and, in turn, the wider the conditions for having the solution $N_i^*>0$. This provides a good indication that the structural vector is the symmetry axis of the hypervolume of the range where the stable solution $N_i^*>0$ is positive.

\section{Results}

\subsection{Validation of model parameterization}

To validate our model parameterization, we investigate whether the positive and globally stable steady state $N^*_i>0$ (given by the structural vector of carrying capacities) is aligned with key macroeconomic indicators of our global socioeconomic system. Recall that the steady state defined by the structural vector is computed as $N_i^* = \frac{1}{\lambda_{1}} K_i^* > 0$, where $\lambda_1$ is the leading eigenvalue of $\boldsymbol\beta$. Interestingly, we find a strong and positive Spearman rank correlation ($r=0.88$, $p<0.001$) between the equilibrium point and countries' GDP (Fig. 2a). The same positive correlation is observed between the number of resources and the GDP of a country, suggesting that wealth is strongly associated with the distribution of resources in our system.

We further test the alignment between the observed resource-competition network and model parameterization by generating new equilibrium points calculated using the structural vector of alternative competition networks. These alternative networks are extracted from randomly generated resource-agent systems (see Appendix B). If these alternative resource-agent systems preserve, in expectation, the observed distribution of resources per agent, the positive correlation between GDP and new equilibrium points is also preserved. In contrast, if the alternative resource-agent systems do not preserve the observed distribution of resources, there are negligible correlations between GDP and the new equilibrium points (for an example see Fig. 2b). These results reveal that both our competition network and parameterization of carrying capacities are indeed capturing important characteristics of the distribution and availability of resources, respectively.

\subsection{Structural stability}

To study whether inter-agent competition increases or decreases the structural stability of the system, we study the effect of the global competition on the range of parameter space of carrying capacities leading to the positive stable steady state of all countries. We quantify this effect by the extent to which the deviations from the structural vector (given by the observed competition network) affect the fraction of countries that remain in a positive stable steady state ($N^*_i>0$), and whether these deviations are modulated by the level of global competition. The larger the range of parameter space compatible with a positive stable steady state of all countries, the larger the structural stability of the system will be.

We generate the deviations (range of parameters) by introducing random proportional perturbations to the structural vector $\boldsymbol{K^*}$, and quantify the deviation between the structural and the perturbed vectors of carrying capacity using the previously defined measure of deviation $\eta$. The proportional perturbations are generated by multiplying each element of the carrying capacity vector by a random number sampled from a lognormal distribution with mean zero and variance sampled uniformly within the range $[0,\dots,0.9]$. To find the corresponding fraction of countries that remain in a positive stable steady state, we simulate our dynamical model using the perturbed vectors as initial parameters $\boldsymbol K$. Simulations to find the equilibrium points are performed by integrating the system of ordinary differential equations using the Runge-Kutta method of Matlab routline ode45.

Figure 3 shows that when the deviation $\eta$ from the structural vector is small (negative on a log scale), all countries remain in a positive stable steady state (yellow/light region). However, the larger the deviation, the lower the fraction of countries that remain in this steady state. This confirms numerically that the structural vector is the center of the range of parameter space compatible with the positive stable steady state of all countries. Importantly, Figure 3 also reveals that the closer the system is to the boundary of maximum global competition ($\hat{\rho}$), the narrower the parameter space leading to a positive stable steady state of all countries, and in turn the lower the structural stability of the system. This reveals that the structural stability of the system decreases as the level of global competition among countries increases. 

Because the level of global competition ($\rho$) is a function of the resources shared among countries, it is important to know whether a redistribution of resources may increase or decrease the level of global competition and, in turn, affect the structural stability of the system. To capture these effects, we quantify the level of global competition ($\rho$) in alternative inter-agent resource-competition networks (extracted from randomly generated resource-agent systems, see Appendix B for further details) relative to the level of global competition computed from the observed inter-agent competition network ($\rho^*$). This means that an alternative competition network increases the level of competition when $\rho / \rho^*>1$, and vice versa when $\rho / \rho^*<1$.

In the case when alternative competition networks preserve, in expectation, the observed distribution of resources per countries, we find that the level of global competition increases relative to the observed network (see black symbols in Figure 4). These findings support standard macroeconomic theory \cite{Lucas,Arrow,Tinberg} that suggests that the observed characteristics of socioeconomic systems should be optimizing the present economic constraints. However, in the case when the distribution of resources per countries is not preserved, we find that the lower the heterogeneity among countries (measured by the standard deviation of resources per countries), the lower the level of competition $\rho / \rho^*<1$ and, in turn, the higher the structural stability of the system (see Fig. 4). These results reveal that the inter-agent resource-competition network is a significant factor modulating the range of conditions compatible with the positive stable steady state of all countries in the system. Moreover, our findings reveal that the structural stability of the system is inversely associated with the level of competition for resources and the heterogeneity in the distribution of resources.

\subsection{Risk assessment}

To provide further insights into the factors shaping the structural stability of the observed global socioeconomic system, we explore the risk associated with individual countries under rapid changes in the distribution and availability of resources. Following economic theory \cite{Lucas,Arrow,Tinberg}, we refer to rapid changes as the perturbations that can occur faster than the adaptation of the system to the new socioeconomic conditions. Specifically, we use a Monte Carlo approach to quantify the probability that a country remains in a positive stable steady state ($N^*_i>0$) when the system is subject to different types of perturbations. Specifically, perturbations are generated by random deviations from the structural vector of carrying capacities, different levels of global competition, and changes in the inter-agent resource-competition network. 

To explore the risk associated with rapid changes in the availability of resources, as before we introduce proportional random perturbations to the structural vector of carrying capacities, simulate the dynamical model on the observed competition network using the perturbed vectors as initial parameters $\boldsymbol{K}$, and investigate the fraction of times a country remains under a positive stable steady state as function of their number of resources. Interestingly, Figure 5a shows that the probability of remaining in a positive stable steady state is almost the same for all countries regardless of their number of resources. However, this probability decreases as the level of global competition in the system increases (see Fig. 5a), echoing our previous results at the network level.

Additionally, we explore the risk associated with rapid changes in the distribution of resources by randomly changing the inter-agent resource-competition network via the resource-agent system (see Appendix B). These changes are investigated both alone and in combination with changes in the availability of resources (i.e., perturbations to the structural vector). In general, we find that the lower the number of initial resources a country has, the lower its probability of remaining in a positive stable steady state (Figs. 5b-c). Overall, there seems to be a saturation point in the number of initial resources after which countries cannot increase any more their chances of remaining under a positive stable steady state. Importantly, these findings reveal that the qualitative behavior of the system is highly sensitive to rapid changes in the distribution of resources.

\section{Discussion}

In this paper, we have used a parsimonious model and network representation of a resource-competition system to investigate the structural stability of global socioeconomic systems.  However, the striking similarities found between model-generated and empirical characteristics suggest that this could be a promising starting point to answer how structurally stable global socioeconomic systems are. We have used the notion of structural stability to study the range of conditions compatible with the stability of a qualitative behavior in which all the constituent agents can be self-sustained across time. Because of the lack of detailed information about the empirical parameter values in the model, our results do not reveal the actual range of conditions tolerated by the observed global socioeconomic system. Yet, our results show that independently of parameter values, the higher the level of competition or the higher the inequality of resources among countries, the lower the structural stability of the system. Importantly, our findings suggest that multinational companies can be used as proxy for resources, and the sustainable behavior of global socioeconomic systems can be highly sensitive to changes in country-company interactions.

We believe our framework provides a new direction to increase our understanding on the capacity of a socioeconomic system to change and adapt. For instance, while the human population might be exponentially growing, we live constrained to a finite number of resources \cite{Costanza}. At present, we might be able to see an equally growing economic development simply because we have not reached our total carrying capacity, i.e., new resources are continuously being explored and exploited. If agents increase their carrying capacities by number or magnitude, they may also increase their total abundance or wealth. However, the positive stable steady state of all agents will depend on whether the new conditions in the system will be aligned or close enough to the corresponding structural vector of carrying capacities. The new challenges will be on how to deal with a limited number of resources under the constraints imposed by the structural vector and how to provide a desirable distribution of wealth among agents.

Our framework can also be applied to other domains such as biological systems. Indeed, ecological systems are constantly changing in response to both their internal and external pressures. For instance, the concept of structural stability has been applied to mutualistic systems to investigate whether there are some network characteristics that can increase the likelihood of species coexistence \cite{Rohr2}. The resource-competition system used in this work has been intensively used in ecology to describe the competition for resources among species \cite{MacArthur}. This suggests that our findings can also shed new light into the factors shaping the competition among predators that forage on a common set of prey, or the competition among plants for minerals, water, and sunlight. 

\section*{Appendices}
{\bf Appendix A. Mathematical derivations of the dynamical competition model.} In this appendix, we give analytical results for the dynamical system described by the set of ordinary differential equations (\ref{equ:ode}). Specifically, we study the existence of steady states, their feasibility (i.e., all agents having a strictly positive state), and their global stability. First, we prove that if the initial conditions of the dynamical system are in the positive quadrant ($\mathbb{R}^n_{\geq 0}$), then their trajectories also remain in the positive quadrant. This implies that we have to focus on the existence and stability of steady states in the positive quadrant only.

\begin{lemma}
	Consider a dynamical system given by the set of ordinary differential equations (\ref{equ:ode}) with initial conditions in the positive quadrant ($\mathbb{R}^n_{\geq 0}$), i.e., $N_i(t=0) \geq 0$. Then the trajectory of the system remains in the positive quadrant, i.e., $N_i(t) \geq 0$ for all time $t \geq 0$.
\end{lemma}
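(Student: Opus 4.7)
The plan is to exploit the multiplicative structure of the vector field: the right-hand side of Equation \ref{equ:ode} has $N_i$ as an explicit factor, so the hyperplane $\{N_i=0\}$ is invariant for each $i$. A clean way to turn this observation into a proof is to freeze the other coordinates along a solution trajectory and view the $i$-th equation as a linear scalar ODE.

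First I would invoke standard existence/uniqueness for ODEs: since the vector field in Equation \ref{equ:ode} is polynomial (hence locally Lipschitz) in $N$, for any initial condition $N(0)\in\mathbb{R}_{\geq 0}^n$ there is a unique solution $N(t)$ defined on some maximal interval $[0,T)$. Fix such a solution and fix an index $i$. Along this trajectory define
\begin{equation*}
g_i(t) \;=\; \frac{r_i}{K_i}\Bigl(K_i - \sum_{j} \beta_{ij}\, N_j(t)\Bigr),
\end{equation*}
which is a continuous function of $t$ on $[0,T)$ because the solution $N(t)$ is continuous. Then the $i$-th component of Equation \ref{equ:ode} reduces to the linear, time-dependent, homogeneous scalar ODE
\begin{equation*}
\frac{dN_i}{dt} \;=\; g_i(t)\, N_i(t).
\end{equation*}

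This equation can be integrated explicitly to give
\begin{equation*}
N_i(t) \;=\; N_i(0)\,\exp\!\Bigl(\int_0^t g_i(s)\,ds\Bigr),
\end{equation*}
valid for all $t\in[0,T)$. Since the exponential factor is strictly positive and $N_i(0)\geq 0$ by hypothesis, we conclude $N_i(t)\geq 0$, and moreover $N_i(t)=0$ for some $t>0$ if and only if $N_i(0)=0$. As $i$ was arbitrary, the trajectory stays in $\mathbb{R}_{\geq 0}^n$ on its interval of existence.

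The only mild subtlety is justifying that the solution actually extends to all of $[0,\infty)$ so that the conclusion holds for every $t\geq 0$, not merely on $[0,T)$. Once we know $N_i(t)\geq 0$, a standard a priori bound finishes this: from Equation \ref{equ:ode} one gets $dN_i/dt \leq r_i N_i$ (dropping the non-positive competition terms and using $\beta_{ii}=1$ with $N_i\geq 0$), so $N_i(t)\leq N_i(0)e^{r_i t}$ on $[0,T)$. Because the solution cannot blow up in finite time, the maximal interval is $[0,\infty)$, and the invariance statement of the lemma holds for all $t\geq 0$. I don't expect any real obstacle here; the whole argument is just the observation that $\{N_i=0\}$ is an invariant manifold of each factorised component, made rigorous via the explicit exponential integrating factor.
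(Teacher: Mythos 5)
Your proof is correct, but it takes a different route from the paper. The paper argues by contradiction: if some $N_k$ became negative at time $T_1$, continuity would give an earlier time $T_0$ with $N_k(T_0)=0$; since $N_k$ factors out of the right-hand side, $\frac{dN_k}{dt}\big|_{t=T_0}=0$, and the paper then asserts $N_k(t\geq T_0)=0$, contradicting $N_k(T_1)<0$. You instead make the invariance of each hyperplane $\{N_i=0\}$ explicit via the integrating factor: freezing the other coordinates along the given solution, $N_i$ solves the scalar linear ODE $\dot N_i = g_i(t)N_i$ with $g_i$ continuous, hence $N_i(t)=N_i(0)\exp\bigl(\int_0^t g_i(s)\,ds\bigr)\geq 0$. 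This buys you something real: the paper's inference from a single vanishing derivative at $T_0$ to $N_k\equiv 0$ for $t\geq T_0$ is, as written, a gap (a function can have zero derivative at a point and still leave it); what actually justifies it is uniqueness of solutions, i.e.\ precisely the exponential representation you wrote down. Your version also handles a point the paper passes over silently, namely that the solution does not blow up in finite time (via $\dot N_i\leq r_i N_i$ once nonnegativity is known), so the conclusion genuinely holds for all $t\geq 0$. In short, your argument is the rigorous core that the paper's contradiction argument implicitly relies on, plus the global-existence bookkeeping; the paper's version is shorter but leans on an unstated uniqueness/invariance step.
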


\begin{proof}
	Consider that there exists an agent $k$ and a time $T_1$ such that $N_k(t=T_1) <0$. Then as the trajectories of our dynamical system (\ref{equ:ode}) are continuous, there exists $T_0 < T_1$ such that $N_k(t=T_0) = 0$.  This implies that at the time $T_0$ the derivative of $N_k$ vanishes, i.e., $\frac{dN_k}{dt} |_{t=T_0} = 0$. Moreover, this equality is independent on the values of $N_i$ for all $i \neq k$. Therefore, we have that $N_k(t \geq T_0) = 0$, and in particular that $N_k(t=T_1) = 0$. This contradiction proves the lemma.
\end{proof}

Recall that a steady state $\boldsymbol{N^*}$ is called positive if $N_i^*>0$ for all agents $i$. Any positive steady state is by definition the solution of the following linear equation $\boldsymbol{K} = \boldsymbol{\beta} \boldsymbol{N^*}$. Therefore, for a positive steady state to be well defined, we need to assume the competition matrix $\boldsymbol{\beta}$  to be non singular, i.e., $\det (\boldsymbol{\beta}) \neq 0$.\\ 

Next, we prove that a positive steady state is globally stable if and only if the eigenvalues of the competition matrix $\boldsymbol{\beta}$ are strictly positive. Note that by definition our competition matrix $\boldsymbol{\beta}$ is symmetric, then the condition of having all eigenvalues strictly positive is equivalent to being strictly positive definite. Recall that a steady state $\boldsymbol{N^*}$ is called positive if $N_i^*>0$ for all agents $i$.

\begin{lemma}
	Consider that there exists a positive steady state, i.e., there exists $\boldsymbol{N^*}$ such that $N_i^* >0$ and $\boldsymbol K = \boldsymbol\beta \cdot \boldsymbol{N^*}$, and that the competition matrix is nonsingular. Then this steady state is asymptotically globally stable in the strictly positive quadrant $\mathbb{R}^n_{>0}$ if and only if the symmetric competition matrix $ \boldsymbol{\beta}$ is strictly positive definite.
\end{lemma}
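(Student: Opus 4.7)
The plan is to handle the two directions of the equivalence separately using the standard Lyapunov machinery for competitive Lotka--Volterra systems together with a linearization argument for the converse.

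For the forward direction (positive definiteness implies global asymptotic stability), I would introduce the classical MacArthur--Goh Lyapunov function
\begin{equation*}
V(\boldsymbol{N}) \;=\; \sum_i \frac{K_i}{r_i}\left[\,N_i - N_i^* - N_i^* \ln\!\frac{N_i}{N_i^*}\,\right],
\end{equation*}
defined on $\mathbb{R}^n_{>0}$. Because the scalar function $x - 1 - \ln x$ is nonnegative and vanishes only at $x=1$, $V$ is nonnegative with a unique minimum at $\boldsymbol{N^*}$, and it is radially unbounded both as any $N_i \to 0^+$ and as $\|\boldsymbol{N}\| \to \infty$. Differentiating along solutions of (\ref{equ:ode}) and substituting $\boldsymbol{K} = \boldsymbol{\beta}\boldsymbol{N^*}$ to eliminate the carrying capacities, a short algebraic manipulation gives
\begin{equation*}
\dot V \;=\; -\,(\boldsymbol{N} - \boldsymbol{N^*})^{T} \boldsymbol{\beta}\,(\boldsymbol{N} - \boldsymbol{N^*}).
\end{equation*}
When $\boldsymbol{\beta}$ is symmetric and strictly positive definite, $\dot V \leq 0$ with equality only at $\boldsymbol{N^*}$, so LaSalle's invariance principle applied on the compact sublevel sets of $V$ yields global asymptotic stability in $\mathbb{R}^n_{>0}$. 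Lemma~1 already guarantees forward invariance of the positive orthant, so no trajectory can escape it.

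For the converse, I would argue by contrapositive: assume $\boldsymbol{\beta}$ is nonsingular but not strictly positive definite, so (being symmetric) it has at least one strictly negative eigenvalue. Linearizing (\ref{equ:ode}) at $\boldsymbol{N^*}$ gives the Jacobian $J = -D\,\boldsymbol{\beta}$, where $D = \mathrm{diag}(r_i N_i^*/K_i)$ is a positive diagonal matrix since $\boldsymbol{N^*}$ is assumed positive. The matrix $D\boldsymbol{\beta}$ is similar to the symmetric matrix $D^{1/2}\boldsymbol{\beta}D^{1/2}$, which by Sylvester's law of inertia has the same signature as $\boldsymbol{\beta}$. Consequently $D\boldsymbol{\beta}$ possesses a strictly negative eigenvalue, hence $J$ possesses a strictly positive one, and $\boldsymbol{N^*}$ is a linearly unstable equilibrium. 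In particular it cannot be globally asymptotically stable, completing the equivalence.

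The main technical obstacle I expect is the global (as opposed to local) convergence claim in the forward direction: one must rule out trajectories that oscillate or that approach the boundary of $\mathbb{R}^n_{>0}$ without converging. This is exactly where the properness of $V$ is essential; its blow-up near the coordinate hyperplanes keeps trajectories uniformly away from extinction, so every trajectory lies in a compact positively invariant sublevel set on which LaSalle's principle forces $\omega$-limit points to satisfy $\dot V = 0$, i.e., to coincide with $\boldsymbol{N^*}$. The remaining computations (the derivative of $V$ and the signature argument via Sylvester's law) are routine, so the proof reduces to writing out these two observations carefully.
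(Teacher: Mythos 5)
Your proposal is correct and follows essentially the same route as the paper: the forward direction is Goh's Lyapunov argument (you write out the function $\sum_i \frac{K_i}{r_i}\bigl(N_i - N_i^* - N_i^*\ln\frac{N_i}{N_i^*}\bigr)$ explicitly where the paper invokes Goh's diagonal-stability theorem), and the converse is the same Jacobian $J=-D\boldsymbol\beta$ plus congruence/Sylvester inertia argument, merely phrased contrapositively. No gaps worth flagging.
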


\begin{proof}
	$\Longleftarrow$ In ref \cite{Goh}, Goh introduced a Lyapunov function that proves the global asymptotic stability in the domain $\mathbb{R}^n_{>0}$ of any positive steady state $N_i^*>0$ under the condition that the  matrix $\boldsymbol{\beta}$ is Lyapunov diagonal stable. A matrix $\boldsymbol{\beta}$ is Lyapunov diagonal stable is there exists a strictly positive diagonal matrix $\boldsymbol D$ such that $\boldsymbol D \boldsymbol{\beta} + \boldsymbol{\beta}^T \boldsymbol D$ is strictly positive definite. As in our case $\boldsymbol{\beta}$ is already strictly positive definite, then it is also Lyapunov diagonal stable. Thus, any positive steady state is globally stable. This proves the lemma from the right to the left.
		
	$\Longrightarrow$ Consider that the positive steady state $N_i^* >0$ is asymptotically globally stable. This implies that the eigenvalues of the Jacobian matrix have strictly negative real parts under the assumption that $\det(\boldsymbol{\beta})\neq 0$. The Jacobian at the positive steady state is given by the matrix $J = -D(\boldsymbol a) \boldsymbol\beta$, where $D(\boldsymbol a)$ is the diagonal matrix formed by the elements of the vector $\boldsymbol a$. The elements of $\boldsymbol a$ are strictly positive and given by $a_i = r_i / K_i N^*_i$. By similarity transformation the signature (also called the inertia) of the matrix $D(\boldsymbol a) \boldsymbol \beta$ is equal to the signature of the matrix $D(\boldsymbol a)^{1/2} \boldsymbol \beta D(\boldsymbol a)^{1/2}$. Indeed, by similarity transformations we have the following equalities:
		\begin{equation*}
		\begin{split}
			\text{signature}(D(\boldsymbol a) \boldsymbol \beta) & = \text{signature}(D(\boldsymbol a) \boldsymbol \beta D(\boldsymbol a)^{1/2} D(\boldsymbol a)^{-1/2}) \\ 
			& = \text{signature}(D(\boldsymbol a)^{1/2} \boldsymbol \beta D(\boldsymbol a)^{1/2}).
		\end{split}
		\end{equation*} 
		Moreover, as $\boldsymbol{\beta}$ is symmetric, Sylvester's law implies 
		$$\text{signature}(D(\boldsymbol a)^{1/2} \boldsymbol \beta D(\boldsymbol a)^{1/2})= \text{signature}(\boldsymbol \beta).$$ 
		Therefore the eigenvalues of $\beta$ are all strictly positive, and this proves the lemma from the right to the left.
\end{proof}

Lemma 2 implies that if we want the global asymptotic stability of a positive steady state we have to limit the level of global competition $\mu$ such that all eigenvalues of the matrix $\boldsymbol\beta$ are strictly positive. Indeed, for $\mu = 0$ the eigenvalues of the matrix $\boldsymbol\beta$ are all equal to one. As the eigenvalues are a continuous function of $\mu$, there exists a critical level $\hat{\mu}$ at which the lowest eigenvalue is equal to zero. Thus, for a level of global competition in the interval $0 \leq \mu < \hat{\mu}$, a positive steady state is asymptotically globally stable.
 
The previous lemma establishes the global asymptotic stability condition of a positive steady state. However, a positive steady state does not exist for all vectors of carrying capacity $\boldsymbol K \in \mathbb{R}^n$. There is in fact a subset of carrying capacity vectors compatible with a positive steady state. This subset is by definition $F_D = \{ \boldsymbol{K} \in \mathbb{R}^n |  \text{there exist } N^*_i >0 \text{, such that } K_i = \sum_j \beta_{ij} N^*_j \}$. That subset can simply be expressed as the strictly positive linear combination of the vectors $\boldsymbol{v_k} = \boldsymbol{\beta} \boldsymbol{e_k}$ ($\boldsymbol{e_k}$ are the vectors of the standard orthonormal basis of $\mathbb{R}^n$), $F_D = \{ \lambda_1 \boldsymbol{v_1} + \cdots + \lambda_n \boldsymbol{v_n} | \lambda_1, \cdots \lambda_n > 0\}$. As the elements of the matrix $\boldsymbol\beta$ are all positive, this implies that the vectors  $\boldsymbol{v_k}$ have all their elements positive, and in turn this also implies that the vectors of carrying capacity leading to a positive steady state have all their elements positive, i.e., $F_D \subset \mathbb{R}^n_{\geq 0}$

In the next lemmas, we study the existence and stability of steady states in the positive quadrant $\mathbb{R}^n_{\geq 0}$ for any vector of carrying capacity $\boldsymbol K$. First, let us remark that without loss of generality, we can always assume that a steady state has the following form $\boldsymbol{N^*} = ( 0, \cdots, 0, \underbrace{N^*_{m+1}, \cdots, N^*_n }_{>0}  )^{T} $.  Indeed, this form can always be achieved by renumbering the agents such that the first $m$'s are the non-positive ones and the last $n-m$ are the positive ones. 

\begin{lemma}
	Consider that the symmetric competition matrix $ \boldsymbol\beta$ is strictly positive definite. Then, for all vectors of carrying capacity $\boldsymbol K \in \mathbb{R}^n$, there exists one and only one steady state, written without loss of generality in the form $\boldsymbol{N^*} = ( 0, \cdots, 0, \underbrace{N^*_{m+1}, \cdots, N^*_n }_{>0}  )^{T} $, that is globally asymptotically stable in the domain $\Omega = R_{\geq 0}^m \cup R_{>0}^{n-m}$. Moreover, all other steady states in the positive quadrant $\mathbb{R}_{\geq 0}^n$ are unstable. Finally, the value of this stable steady state is only determined by the competition matrix $ \boldsymbol\beta$ and the carrying capacity vector $\boldsymbol K$. 
\end{lemma}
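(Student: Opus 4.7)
The plan is to tie the equilibrium problem to the strictly concave quadratic program
$$\max_{\boldsymbol{N}\geq 0}\; V(\boldsymbol{N}) = \boldsymbol{K}^T\boldsymbol{N} - \tfrac{1}{2}\,\boldsymbol{N}^T\boldsymbol{\beta}\boldsymbol{N},$$
and then to promote the static KKT picture into a dynamic attractivity statement by extending the Goh Lyapunov function of Lemma 2 to the boundary faces of the positive orthant. Strict positive definiteness of $\boldsymbol{\beta}$ will supply both the strict concavity of $V$ (for uniqueness) and the Lyapunov diagonal stability (for global stability).

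For existence and uniqueness of a steady state in the stated form, I would note that $V$ is strictly concave and coercive on $\mathbb{R}^n_{\geq 0}$ (since $\boldsymbol{N}^T\boldsymbol{\beta}\boldsymbol{N}$ grows at least as $\lambda_{\min}\|\boldsymbol{N}\|^2$), so a unique maximizer $\boldsymbol{N}^*$ exists. Its KKT conditions read
$$N_i^*\geq 0,\qquad K_i-\sum_j\beta_{ij}N_j^*\leq 0,\qquad N_i^*\Bigl(K_i-\sum_j\beta_{ij}N_j^*\Bigr)=0,$$
which coincide with the equilibrium equations of (\ref{equ:ode}) restricted to $\mathbb{R}^n_{\geq 0}$. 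Relabelling so that $N_i^*=0$ for $i\leq m$ and $N_i^*>0$ for $i>m$ puts $\boldsymbol{N}^*$ into the required form, and strict concavity makes it the unique such point.

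For global asymptotic stability on $\Omega$, I would extend Goh's Lyapunov function to
$$W(\boldsymbol{N})=\sum_{i>m}\frac{K_i}{r_i}\Bigl[N_i-N_i^*-N_i^*\ln\frac{N_i}{N_i^*}\Bigr]+\sum_{i\leq m}\frac{K_i}{r_i}N_i.$$
A direct calculation, using $K_i=\sum_j\beta_{ij}N_j^*$ for $i>m$ and $N_i^*=0$ for $i\leq m$, collapses $\dot W$ into
$$\dot W=-(\boldsymbol{N}-\boldsymbol{N}^*)^T\boldsymbol{\beta}(\boldsymbol{N}-\boldsymbol{N}^*)+\sum_{i\leq m}N_i\Bigl(K_i-\sum_j\beta_{ij}N_j^*\Bigr).$$
The first term is $\leq 0$ by strict positive definiteness, the second by the KKT slack inequalities, and both vanish simultaneously only at $\boldsymbol{N}^*$. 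Coercivity of $W$ bounds orbits; Lemma 1 keeps them in $\mathbb{R}^n_{\geq 0}$; and LaSalle's invariance principle then delivers global attractivity of $\boldsymbol{N}^*$ in $\Omega$.

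Finally, any other equilibrium $\widetilde{\boldsymbol{N}}\in\mathbb{R}^n_{\geq 0}$ must violate some KKT condition, so there is an index $k$ with $\widetilde N_k=0$ and $K_k-\sum_j\beta_{kj}\widetilde N_j>0$. The Jacobian of (\ref{equ:ode}) at $\widetilde{\boldsymbol{N}}$ is block triangular with respect to the vanishing and positive coordinates, and its $k$-th diagonal entry $r_k(K_k-\sum_j\beta_{kj}\widetilde N_j)/K_k$ is a strictly positive eigenvalue, so $\widetilde{\boldsymbol{N}}$ is unstable. The last claim that $\boldsymbol{N}^*$ depends only on $\boldsymbol{\beta}$ and $\boldsymbol{K}$ is immediate since the growth rates $r_i$ enter neither the KKT conditions nor the equilibrium equations. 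The \emph{main obstacle} is engineering the Lyapunov function so that its interior and boundary pieces fit together: the weights $K_i/r_i$ are chosen precisely so that the cross terms cancel and $\dot W$ decomposes cleanly into a negative quadratic form plus KKT slack contributions, and verifying that this decomposition is genuinely non-positive across all of $\Omega$ is the heart of the argument.
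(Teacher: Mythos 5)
Your proposal is correct and follows essentially the same route as the paper: the equilibrium is characterized by the complementarity (KKT) conditions, global stability on $\Omega$ is established with the same Goh-type Lyapunov function with weights $K_i/r_i$ and the same decomposition of its derivative into a negative definite quadratic form plus non-positive slack terms, and any other boundary equilibrium is shown unstable via a strictly positive diagonal entry of the (block-triangular) Jacobian. The only difference is that you obtain existence and uniqueness of the complementarity solution self-containedly from the strictly concave quadratic program, whereas the paper cites the linear complementarity literature (Takeuchi \& Adachi) for that step --- an equivalent argument for a symmetric positive definite $\boldsymbol\beta$.
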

\begin{proof}

	\begin{enumerate}
	
		\item 
		Consider $\boldsymbol{N^*} = ( 0, \cdots, 0, \underbrace{N^*_{m+1}, \cdots, N^*_n }_{>0}  )^{T}$ to be a steady state. The Jacobian evaluated at this steady state is then given by the following 2-by-2 block matrix:
		$$J = -D( \boldsymbol{b}) \begin{pmatrix}
			\sum_j \beta_{1j}N^*_j - K_1 & \dots & 0 & 0 & \dots & 0  \\
			\vdots & \ddots & \vdots & \vdots & \ddots & \vdots \\
			0 & \dots & \sum_j \beta_{mj}N^*_j - K_m& 0 & \dots & 0 \\
			N^*_{m+1}\beta_{m+1,1} & \dots & N^*_{m+1}\beta_{m+1,m} & N^*_{m+1}\beta_{m+1,m+1} & \dots & N^*_{m+1}\beta_{m+1,n}  \\
			\vdots & \ddots & \vdots & \vdots & \ddots & \vdots \\
			N^*_{n}\beta_{n,1} & \dots & N^*_{n}\beta_{n,m} & N^*_{n}\beta_{n,m+1} & \dots & N^*_{n}\beta_{n,n} 
		\end{pmatrix}.$$
		The elements of the vector $\boldsymbol{b}$ are strictly positive and given by $b_i = r_i / K_i$, and the matrix $D(\boldsymbol{b})$ is a diagonal matrix formed by the elements of the vector $\boldsymbol{b}$. The steady state $\boldsymbol{N^*}$ is locally stable if and only if $\sum_j \beta_{ij}N^*_j - K_i > 0$ for all $i \in \{1,\cdots,m\}$, and the real parts of the eigenvalues of the sub-matrix 
		$$\begin{pmatrix}
			b_{m+1}N^*_{m+1}\beta_{m+1,m+1} & \dots & b_{m+1}N^*_{m+1}\beta_{m+1,n}  \\
			\vdots & \ddots & \vdots \\
			b_{n}N^*_{n}\beta_{n,m+1} & \dots & b_{n}N^*_{n}\beta_{n,n}
		\end{pmatrix} $$ 
		are strictly positive. The latter condition is automatically satisfied as the matrix $\boldsymbol{\beta}$ is symmetric and strictly positive definite. Then, the conditions of existence and local stability of $\boldsymbol{{N^*}}$ can be summarized by:
		$$N^*_i \geq 0, \quad \sum_j \beta_{ij}N^*_j - K_i \geq 0 \quad \text{ and } \quad N^*_i(\sum_j \beta_{ij}N^*_j - K_i) = 0,$$
		for all agents $i$, with the second inequality being strict if $N_i = 0$.
		
		\item
		We recall that a vector $\boldsymbol{N^*}$ is the solution of a linear complementarity problem \cite{Berman} defined by the competition matrix  $\boldsymbol{\beta}$ and the carrying capacity vector $\boldsymbol{K}$ if it satisfies the following inequalities:
		$$N^*_i \geq 0, \quad \sum_j \beta_{ij}N^*_j - K_i \geq 0 \quad \text{ and } \quad N^*_i(\sum_j \beta_{ij}N^*_j - K_i) = 0.$$
		Moreover, as in our case, the competition matrix  $\boldsymbol{\beta}$ is strictly positive definite and there exists one and only one solution to that linear complementarity problem \cite{Takeuchi}. 
						
		\item
		We prove that the steady state, which is the solution of the linear complementarity problem defined by the competition matrix  $\boldsymbol{\beta}$ and the carrying capacity vector $\boldsymbol{K}$ is asymptotically globally stable in the domain $\Omega = R_{\geq 0}^m \cup R_{>0}^{n-m}$. The proof is based on the following Lyapunov function introduced by Goh in ref. \cite{Goh2}:
		$$V(\boldsymbol{N}) = \sum_{i=1}^m d_i N_i + \sum_{i=m+1}^n d_i  \left( N_i - N_i^* + \frac{1}{N_i^*} \log \left(\frac{N_i}{N_i^*} \right) \right),$$
		with $d_i$ some strictly positive numbers.
		Clearly, we have $V(\boldsymbol{N}) \geq 0$, as $N_i^* \geq 0$, and $N_i - N_i^* + \frac{1}{N_i^*} \log \left(\frac{N_i}{N_i^*} \right) \geq 0$ for all $i \in \{m+1,\cdots,n\}$. Moreover $V(\boldsymbol{N})  =0$ if and only if $\boldsymbol{N} = \boldsymbol{N^*}$. Let us compute its derivative as a function of time. We obtain 
		$$\frac{dV}{dt} = \sum_{i=1}^m d_i \frac{r_i}{K_i} N_if_i + \sum_{i=m+1}^n d_i \frac{r_i}{K_i}(N_i - N_i^*)f_i,$$
		where $f_i = K_i - \sum_{j=1}^n \beta_{ij}N_j$. For $i \in \{m+1,\cdots,n\}$, consider the fact that $K_i = \sum_{i=1}^n \beta_{ij}N_j^*$, then we can write $f_i$ as: $f_i = -\sum_{j=1}^n \beta_{ij} (N_j-N_j^*)$. For $i \in \{1,\cdots,m\}$, we rewrite $f_i$ like: $f_i = K_i - \sum_{j=1}^n \beta_{ij}N_j^* - \sum_{i=j}^n \beta_{ij}(N_j-N_j^*)$. Substituting these two expressions into the derivative of the Lyapunov function we obtain
		$$\frac{dV}{dt} = \sum_{i=1}^m d_i \frac{r_i}{K_i}d_iN_i(K_i - \sum_{j=1}^n \beta_{ij}N_j^*) - \sum_{i=1}^n \frac{r_i}{K_i}d_iN_i(N_i - N_i^*)\beta_{ij}(N_j - N_j^*).$$
		The first term of the right side is always negative, indeed, $N_i \geq 0$ and for $i \in \{1,\cdots,m\}$ we have $K_i - \sum_{j=1}^n \beta_{ij}N_j^* \leq 0$. The second term of the right side is always strictly positive. Indeed, if we set $d_i = \frac{K_i}{r_i}$, then it is a  quadratic form defined by the strictly positive definite matrix competition matrix $\boldsymbol{\beta}$. Therefore, in the domain $\Omega$, we have that $\frac{dV}{dt} <0$. Thus, the steady state, which is the solution of the linear complementarity problem, is asymptotically globally stable in the domain $\Omega$.
		
		\item
		Consider that we have another steady state, the one given by the solution of the linear complementarity problem. Then, by the uniqueness of the solution of the linear complementarity problem, there is an agent $k$ for which $N^*_k = 0$ and at the same time $\sum_j \beta_{ij}N^*_j - K_i <0$. This implies that one eigenvalue of the Jacobian is strictly positive, thus this steady state is unstable. Therefore, there exists one and only one globally stable steady state, which is given by the solution of the linear complementarity problem defined by the competition matrix $\boldsymbol{\beta}$ and the carrying capacity vector $\boldsymbol{K}$. This proves the two first assertions of the lemma. For the last assertion it is enough to remark that the solution of the linear complementarity is only function $\boldsymbol{\beta}$ and vector $\boldsymbol{K}$. Therefore, the value of the stable steady state is also only a function of $\boldsymbol{\beta}$ and vector $\boldsymbol{K}$.
		
	\end{enumerate}
		
\end{proof}

All these lemmas together imply that under the condition that all eigenvalues of $ \boldsymbol\beta$ are strictly positive, i.e., $ \boldsymbol\beta$ is a strictly positive definite matrix, the trajectories of the dynamical system (\ref{equ:ode}) starting in the strictly positive quadrant converge to a unique steady state. Moreover, for a given competition matrix $\boldsymbol\beta$, the value of that steady state is only function of the carrying capacity $\boldsymbol{K}$; the growth rate $\boldsymbol{r}$ only dictates the velocity at which the trajectory converges to the stable steady state.   
\\

{\bf Appendix B. Alternative inter-agent resource-competition networks.} We use a resampling procedure that is able to generate a large gradient of inter-agent resource-competition networks while preserving the total number of interactions in the network\cite{Rohr}. 

First, we randomize the resource-agent system (i.e, the bipartite network) between agents (countries) and resources (companies). Note that two agents interact if they share a resource, and the strength of the interaction is equal to the number of shared resources. This randomization is performed by inferring the probability of an interaction between an agent $i$ and a resource $k$ using the model
\begin{equation} \label{equ:s:mc}
\text{logit}(p(T)_{ik}) = \frac{1}{T} \left(-\kappa (v_i - f_k)^2 + \phi_1 v_k^* + \phi_2 f_k^* \right) + m(T).
\end{equation}
The term $v_i^*$ quantifies the variability in number of resources, the term  $f_k^*$ quantifies the assortative structure of the system, and the temperature $T$ modulates the level of stochasticity in the model.  Since  $v_i^*$ and  $f_k^*$ are a priori unknown, they can be estimated from the observed resource-agent system itself. The parameters $\kappa$, $\phi_1$, and $\phi_2$ are positive scaling parameters that give the importance of the contributions of the terms. Then, based on their estimation, the probability of an interaction between all pairs of agents and resources is inferred. Thus, an alternative resource-agent system can simply be generated by drawing randomly the interactions based on those estimated interaction probabilities. The intercept $m(T)$ is adjusted for each temperature value such that the expected number of interactions is equal to the observed one. When the temperature goes to infinite, our model converges to the Erd\H{o}s-R\'{e}nyi model, when the temperature goes to zero, the system freezes in the most probable configuration predicted by our model, and when $T=1$ we recover the expected distribution of resources.

Second, we transform the previously generated resource-agent system into an inter-agent resource-competition network. This competition network is characterized by a symmetric matrix $\boldsymbol{\beta}$ of size $N \times N$, called the competition matrix. The elements of the competition matrix $\beta_{ij}$ are a function of the number of shared resources between agents.
\\ \\
{\bf ACKNOWLEDGMENTS} We thank Peter Claeys, Daniel B. Stouffer, and Brian Uzzi for valuable comments on an earlier draft of this manuscript. Funding was provided by the European Research Council through an Advanced Grant (JB), FP7-REGPOT-2010-1 program under project 264125 EcoGenes (RPR), and the Spanish Ministry of Education through a FPU PhD Fellowship (LJG).
\\ \\
{\bf Competing financial interests} The authors declare no competing financial interests.

\clearpage
\medskip

\renewcommand{\baselinestretch}{1.5}
 %
 %
 %

\begin{thebibliography}{10}
\expandafter\ifx\csname urlstyle\endcsname\relax
  \providecommand{\doi}[1]{doi:\discretionary{}{}{}#1}\else
  \providecommand{\doi}{doi:\discretionary{}{}{}\begingroup
  \urlstyle{rm}\Url}\fi

\bibitem{Scheffer}
Scheffer, M., Carpenter, S.~R., Lenton, T.~M., Bascompte, J., Brock, W., Dakos,
  V., van~de Koppel, J., van~de Leemput, I.~A., Levin, S.~A., van Nes, E.~H.
  \emph{et~al.}, 2012 Anticipating critical transitions.
\newblock \emph{Science} \textbf{338}, 344--348.

\bibitem{Saavedra2}
Saavedra, S., Rohr, R.~P., Dakos, V. \& Bascompte, J., 2013 Estimating the
  tolerance of species to the effects of global environmental change.
\newblock \emph{Nature Comm.} \textbf{4}, 2350.

\bibitem{May09}
May, R.~M., Levin, S.~A. \& Sugihara, G., 2009 Complex systems: ecology for
  bankers.
\newblock \emph{Nature} \textbf{451}, 893--895.

\bibitem{Schweitzer}
Schweitzer, F., Fagiolo, G., Sornette, D., Vega-Redondo, F., Vespignani, A. \&
  White, D.~R., 2009 Economic networks: The new challenges.
\newblock \emph{Science} \textbf{325}, 422--425.

\bibitem{Haldane}
Haldane, A.~G. \& May, R.~M., 2011 Systemic risk in banking ecosystems.
\newblock \emph{Nature} \textbf{469}, 351--355.

\bibitem{Battiston}
Battiston, S., Puliga, M., Kaushik, R., P.Tasca \& Caldarelli, G., 2012
  Debtrank: Too central to fail? financial networks, the fed and systemic risk.
\newblock \emph{Scientific Reports} \textbf{2}, 541.

\bibitem{Barzel}
Barzel, B. \& Barab\'{a}si, A.~L., 2013 Universality in network dynamics.
\newblock \emph{Nature Phys.} \textbf{9}, 673--681.

\bibitem{Strogatz2001}
Strogatz, S.~H., 2001 \emph{Nonlinear Dynamics and Chaos}.
\newblock Westview Press.

\bibitem{Keynes}
Keynes, J.~M., 1936 \emph{The General Theory of Employment, Interest and
  Money}.
\newblock Palgrave Macmillan.

\bibitem{Tinberg}
Tinberg, J., 1962 \emph{Shaping the World Economy}.
\newblock Twentieth Century Fund.

\bibitem{Sargent}
Sargent, T.~J. \& Wallace, N., 1973 The stability of models of money and growth
  with perfect foresight.
\newblock \emph{Econometrica} \textbf{41}, 1043--1048.

\bibitem{Lucas}
Lucas, R.~E., 1988 On the mechanics of economic development.
\newblock \emph{J. of Monetary Economics} \textbf{22}, 3--42.

\bibitem{Arrow}
Arrow, K.~J. \& Debreu, G., 1954 Existence of an equilibrium for a competitive
  economy.
\newblock \emph{Econometrica} \textbf{22}, 265--290.

\bibitem{Thom}
Thom, R., 1994 \emph{Structural Stability and Morphogenesis}.
\newblock Addison-Wesley Pub.

\bibitem{Vandermeer}
Vandermeer, J.~H., 1975 Interspecific competition: A new approach to the
  classical theory.
\newblock \emph{Science} \textbf{188}, 253--255.

\bibitem{Alberch85}
Alberch, P. \& Gale, E.~A., 1985 A developmental analysis of an evolutionary
  trend: digital reduction in amphibians.
\newblock \emph{Evolution} \textbf{39}, 8--23.

\bibitem{Rohr2}
Rohr, R.~P., Saavedra, S. \& Bascompte, J., 2014 On the structural stability of
  mutualistic systems.
\newblock \emph{Science} \textbf{345}, 1253497.

\bibitem{Solow}
Solow, R.~M., 1956 A contribution to the theory of economic growth.
\newblock \emph{The Q. J. of Economics} \textbf{70}, 65--94.

\bibitem{Romer}
Romer, P.~M., 1986 Increasing returns and long-run growth.
\newblock \emph{The J. of Political Economy} \textbf{94}, 1002--1037.

\bibitem{Anderson}
Anderson, P.~W., Arrow, K.~J. \& Pines, D., (eds) 1988 \emph{The economy as an
  Evolving Complex System}.
\newblock Addison-Wesley.

\bibitem{Costanza}
Costanza, R., Cumberland, J., Daly, H., Goodland, R. \& Norgaard, R., 1997
  \emph{An Introduction to Ecological Economics}.
\newblock CRC Press LLC.

\bibitem{MacArthur}
MacArthur, R., 1970 Species packing and competitive equilibrium for many
  species.
\newblock \emph{Theor. Pop. Biology} \textbf{1}, 1--11.

\bibitem{Case2000}
Case, T.~J., 2000 \emph{An Illustrated Guide to Theoretical Ecology}.
\newblock Oxford University Press.

\bibitem{Goh}
Goh, B.~S., 1977 Global stability in many-species systems.
\newblock \emph{Am. Nat.} \textbf{111}, 135--143.

\bibitem{Bastolla1}
Bastolla, U., L\:{a}ssig, M., Manrubia, S.~C. \& Valleriani, A., 2005
  Biodiversity in model ecosystems, i: coexistence conditions for competing
  species.
\newblock \emph{J. Theor. Biol.} \textbf{235}, 521--530.

\bibitem{Berman}
Berman, A. \& Plemmons, R.~J., 1979 \emph{Nonnegative Matrices in the
  Mathematical Sciences}.
\newblock Academic Press.

\bibitem{Takeuchi}
Takeuchi, Y. \& Adachi, N., 1980 The existence of globally stable equilibria of
  ecosystem of the generalised volterra type.
\newblock \emph{J. Math. Biol.} \textbf{10}, 401--415.

\bibitem{Goh2}
Goh, B.~S., 1978 Sector stability of a complex ecosystem model.
\newblock \emph{Mathematical Biosciences} \textbf{166}, 157--166.

\bibitem{Rohr}
Rohr, R.~P., Naisbit, R.~E., Mazza, C. \& Bersier, L.~F., 2013
  Matching-centrality decomposition and the forecasting of new links in
  networks.
\newblock \emph{arXiv:1310.4633} .

\end{thebibliography}


\clearpage


\begin{figure}[ht]
\centerline{\includegraphics*[width= 0.7 \linewidth]{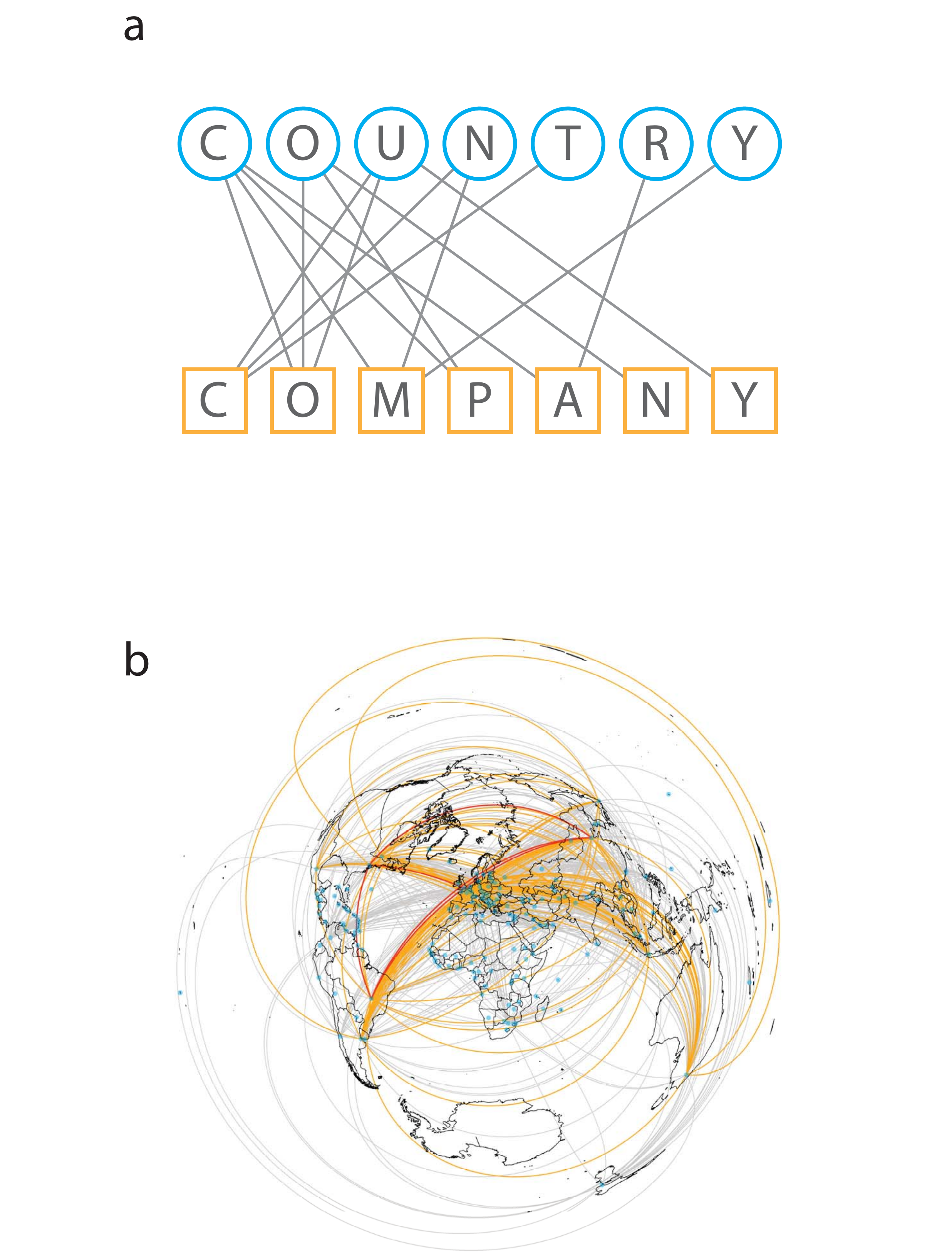}}
\caption{\small Network representation of a global socioeconomic system. The global socioeconomic network is represented by the inter-agent resource-competition network extracted from the resource-agent system. {\bf (a)} The resource-agent system is given by the interactions between agents (countries, represented by circles) and resources (companies, represented by squares). {\bf (b)} The inter-agent resource-competition network is formed by the interactions among agents competing/sharing resources and weighted by their corresponding number of shared resources. Countries are represented by their administrative capital (blue symbols), and  the darker/reddish the interaction the larger the number of companies shared. For the sake of clarity, we do not show interactions between countries that share less than 10 companies. Azimuthal equidistant projection of the Earth centered in longitude 10 and latitude 20 degrees.
}
\label{fig1}
\end{figure}

\begin{figure}[ht]
\centerline{\includegraphics*[width=1 \linewidth]{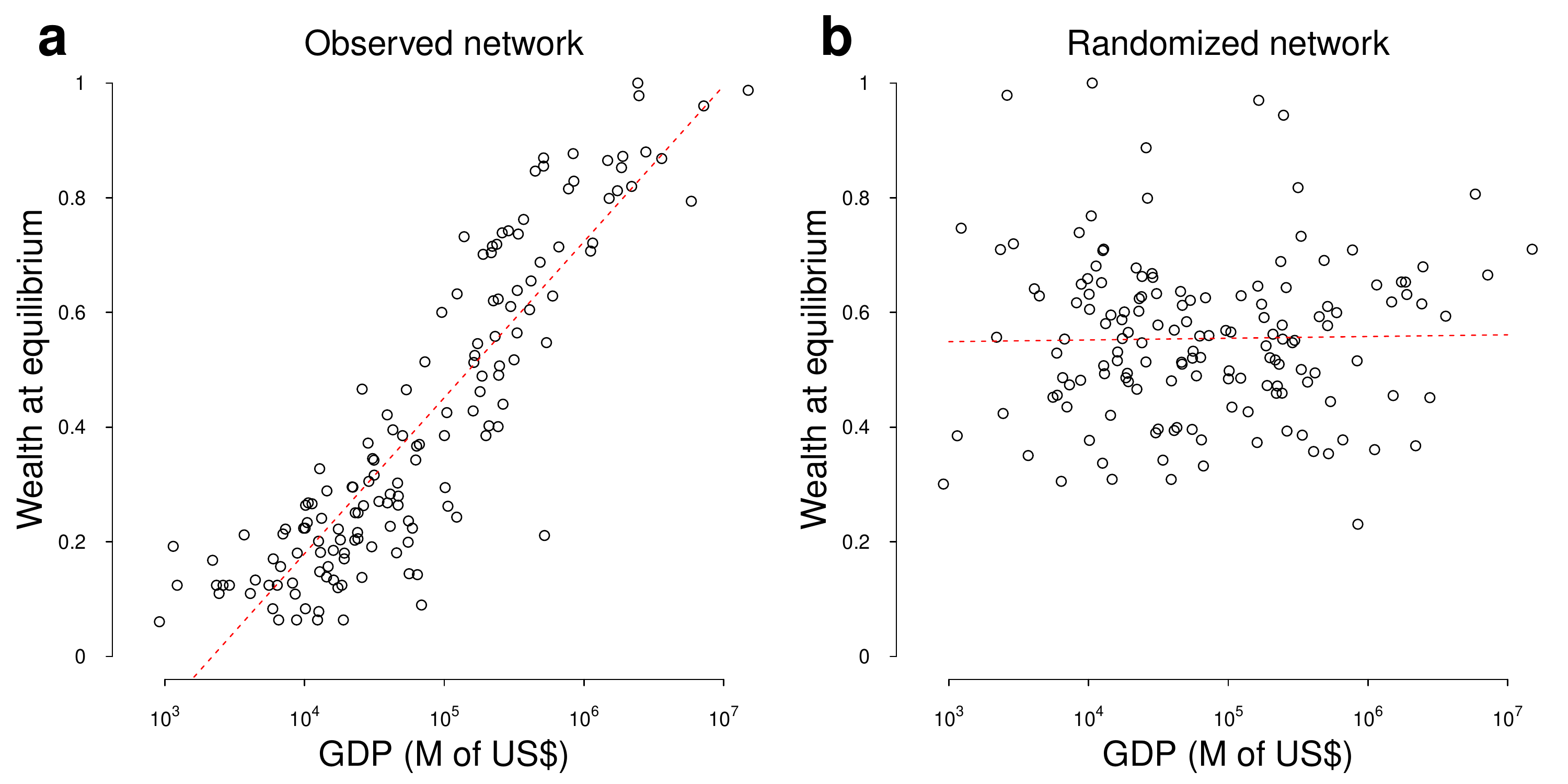}}
\caption{\small Model-generated wealth and empirical GDP. The figure shows the model-generated wealth at a stable equilibrium $N^*_i>0$ for each country (agent) and their empirical GDP in 2013. {\bf (a)} shows that wealth at equilibrium and GDP are significantly and positively correlated ($r=0.88$, Spearman rank correlation) when the dynamical model is parameterized with the structural vector of the observed resource-competition network. {\bf (b)} shows a non-significant correlation ($r=0.003$, Spearman rank correlation) when the dynamical model is parameterized by the structural vector of an alternative competition network where interactions are randomized in a similar fashion to an Erd\H{o}s-R\'{e}nyi model (Appendix B). Here, we show the results for the dynamical model using a half of the boundary of maximum global competition; however, all levels of global competition that satisfy the global stability condition yield similar results.
}
\label{fig2}
\end{figure}

\clearpage

\begin{figure}[ht]
\centerline{\includegraphics[width=0.9 \linewidth]{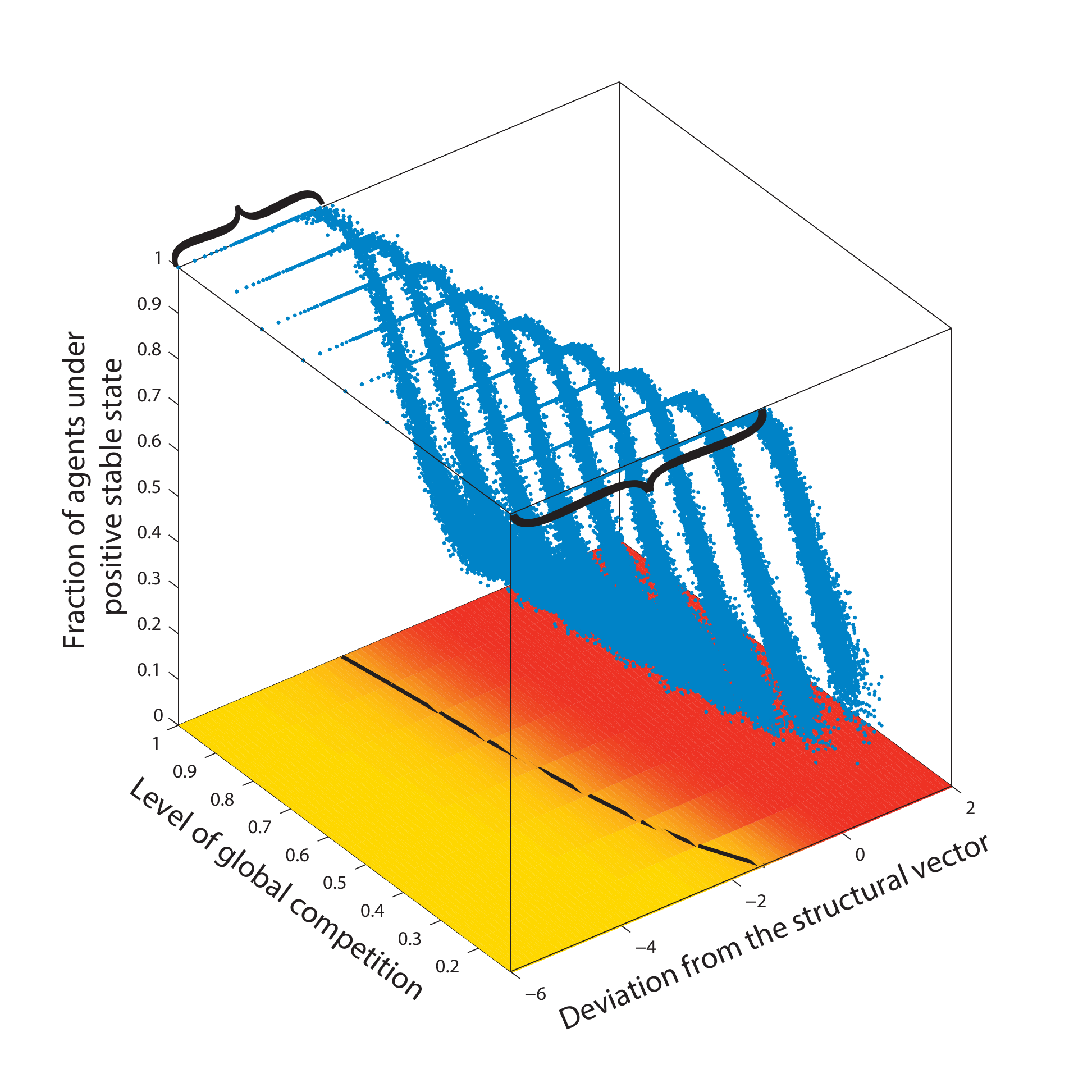}}
\caption{\small Structural stability of a global socioeconomic system. The figure presents the fraction of countries (agents) that remain in a positive stable steady state as function of both the level of deviation $\eta$ (on a log scale) from the structural vector and the level of global competition (standardized to the boundary of maximum global competition). The system is structurally stable within the parameter space compatible with all countries in a positive stable steady state ($N^*_i>0$, yellow/light region). The higher the level of global competition (black dashed line), the smaller the structural stability of the system (e.g. see brackets). For each level of global competition, we simulate different equilibrium points $N_i^*$ by initializing the model with different random proportional perturbations to the structural vector of carrying capacities.
}
\label{fig3}
\end{figure}

\begin{figure}[ht]
\centerline{\includegraphics[width=0.7 \linewidth]{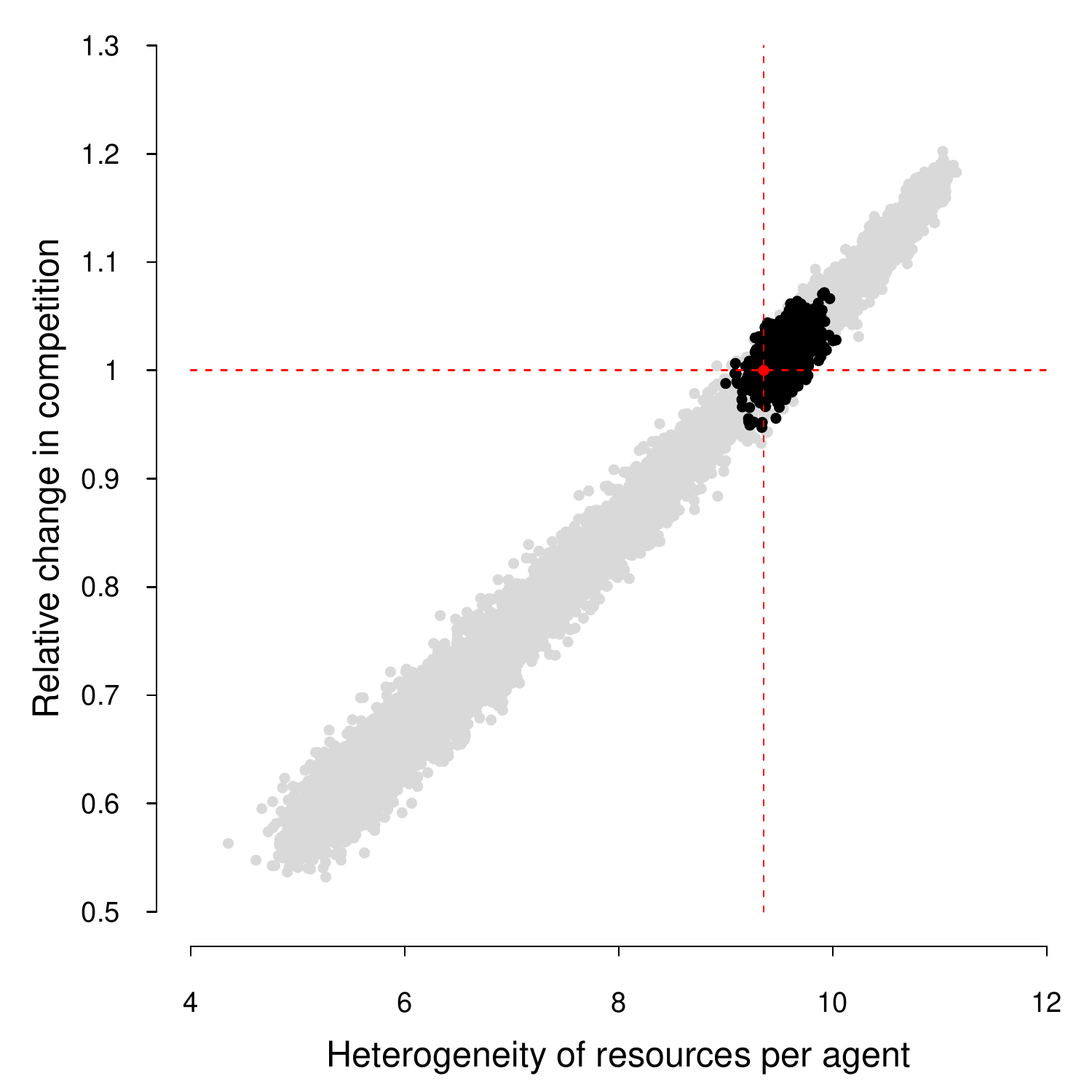}}
\caption{\small Association between distribution of resources and level of global competition. The figure shows that the higher the heterogeneity (standard deviation) in the distribution of resources, the higher the level of global competition in the inter-agent resource-competition system. The x-axis corresponds to the family of distribution of resources calculated from alternative resource-competition networks, which are extracted from randomly generated resource-agent systems (see Appendix B). The y-axis correspond to the relative change ($\rho / \rho^*$) between the level of competition in an alternative competition network $\rho$ and the level of competition in the observed competition network $\rho^*$ (red symbol). The black symbols correspond to alternative competition networks generated by preserving the expected distribution of resources per agent (Appendix B).
}
\label{fig4}
\end{figure}

\begin{figure}[ht]
\centerline{\includegraphics[width=1.2 \linewidth]{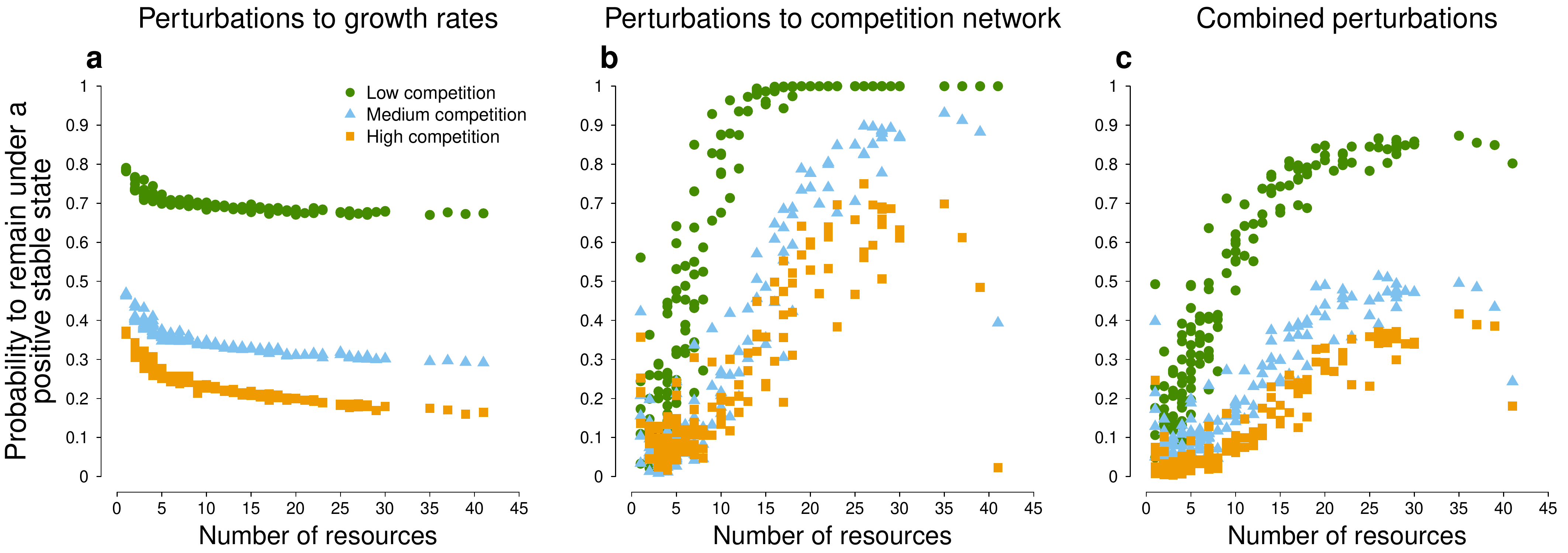}}
\caption{\small Risk assessment of changes in the global socioeconomic system. For high, medium, and low levels of global competition (colors/symbols) the figure shows as function of the number of resources the fraction of times each country (agent) remains in a positive stable steady state after ({\bf a}) a large gradient of proportional random perturbations to the structural vector of carrying capacities; ({\bf b}) changes in the resource-competition network; and ({\bf c}) to a combination of {\bf a} and {\bf b}. Each point corresponds to a country. In each scenario, we simulate 100 thousand different cases.
}
\label{fig5}
\end{figure}

\end{document}